\documentclass[journal]{IEEEtran}

\usepackage[T1]{fontenc}
\usepackage{amsmath,amssymb,amsfonts}
\usepackage{algorithmic}
\usepackage{algorithm}
\usepackage{graphicx}
\usepackage{textcomp}
\usepackage{xcolor}
\usepackage{cite}
\usepackage{hyperref}
\usepackage{booktabs}
\usepackage{multirow}
\usepackage{tikz}
\usepackage{pgfplots}
\pgfplotsset{compat=1.18}
\usetikzlibrary{
  arrows.meta,
  positioning,
  shapes.geometric,
  shapes.misc,
  calc,
  decorations.pathmorphing,
  decorations.markings,
  patterns,
  fit,
  backgrounds,
  shadows.blur,
  matrix
}

\definecolor{sensing}{RGB}{41,128,185}
\definecolor{commu}{RGB}{39,174,96}
\definecolor{memory}{RGB}{142,68,173}
\definecolor{compu}{RGB}{231,76,60}
\definecolor{dtcol}{RGB}{243,156,18}
\definecolor{edgecol}{RGB}{52,73,94}
\definecolor{lightbg}{RGB}{236,240,241}
\definecolor{gridgray}{RGB}{200,200,200}

\newtheorem{theorem}{Theorem}

\newcommand{\Tsync}{T_{\text{sync}}}
\newcommand{\Etot}{E_{\text{tot}}}

\newcommand{\real}{\mathbb{R}}

\begin{document}

\title{SMCC-Empowered Digital Twins for Sensorless Monitoring in Large-Scale AI-Driven IoT Systems}

\author{Vincenzo~Sammartino%
\thanks{Vincenzo Sammartino is with the Dipartimento di Informatica, Universit\`a di Pisa, Pisa, 56127, Italy, and also with the King Abdullah University of Science and Technology (KAUST), Thuwal, 23955, Saudi Arabia (e-mail: vincenzo.sammartino@phd.unipi.it).}%
\thanks{This paper has been submitted to the IEEE Internet of Things Journal Special Issue on Integrated Sensing, Memory, Communication and Computation for Large-Scale AI Based IoT Systems.}%
}

\markboth{IEEE Internet of Things Journal}%
{Sammartino: SMCC-Empowered Digital Twins for Sensorless Monitoring in Large-Scale AI-Driven IoT Systems}
\maketitle

\begin{abstract}
The deployment of AI-driven Digital Twins (DTs) in large-scale Internet-of-Things (IoT) ecosystems demands continuous, high-fidelity synchronization between the physical environment and its virtual replica.
Conventional approaches rely on dense sensor deployments, which introduce prohibitive costs in terms of hardware, energy, and network bandwidth.
In this paper, we propose SMCC-DT, an integrated Sensing--Memory--Communication--Computation (SMCC) framework that enables \emph{sensorless monitoring} of physical assets by exploiting Integrated Sensing and Communication (ISAC) waveforms at the 6G Edge.
Under the SMCC-DT paradigm, a single radio signal simultaneously extracts environmental telemetry (Sensing) and delivers it to an Edge server (Communication), where a large-scale AI model is loaded into constrained memory (Memory) and executed (Computation) to update the DT state.
We formulate the DT synchronization problem as a cross-layer optimization that jointly allocates transmit power, beamforming vectors, memory partitions, and CPU frequency to minimize the end-to-end synchronization latency subject to sensing accuracy, throughput, memory capacity, and computational budget constraints.
Because the resulting mixed-integer nonlinear program is NP-hard, we design a Proximal Policy Optimization (PPO)-based Deep Reinforcement Learning (DRL) agent, termed \textsc{SmccAgent}, that learns near-optimal resource allocation policies online.
Extensive simulations over a 500-node industrial IoT testbed demonstrate that SMCC-DT reduces DT synchronization latency by 38.7\% and total energy consumption by 27.4\% compared to state-of-the-art orthogonal and compute-only baselines, while sustaining sensing accuracy above 95\% and model inference throughput above 30 frames per second.
\end{abstract}

\begin{IEEEkeywords}
Digital twin, integrated sensing and communication, SMCC, edge intelligence, deep reinforcement learning, 6G, Internet of Things
\end{IEEEkeywords}

\section{Introduction}
\label{sec:introduction}

\IEEEPARstart{D}{igital} Twins have emerged as a cornerstone technology for next-generation IoT systems, enabling real-time virtual replicas of physical assets that support predictive maintenance, anomaly detection, and autonomous decision-making~\cite{Fuller2020DT,Tao2019DT5D,Liu2021DTSurvey}.
The proliferation of smart cities, autonomous factories, and connected vehicles has amplified the demand for DTs that operate at scale, ingesting continuous streams of environmental data to maintain high-fidelity synchronization with the physical world~\cite{Jones2020DTSurvey,Minerva2020DTIoT}.

Contemporary DT architectures overwhelmingly depend on dedicated sensor networks to acquire the telemetry data that feeds their models~\cite{Qi2021DTCloud,Lu2020DTEdge,Wu2021DTOpt}.
In a typical industrial IoT deployment, hundreds of accelerometers, thermocouples, and vibration sensors are installed on production lines, each generating data that must traverse a multi-hop wireless network before reaching the computation tier.
This design paradigm suffers from three fundamental limitations.
First, the hardware cost of provisioning and maintaining dense sensor arrays scales linearly with the number of monitored assets, rendering city-scale or campus-scale DT deployments economically infeasible~\cite{Nguyen2021DTFL,Khan2022DTIoTSurvey}.
Second, the data traffic generated by high-frequency sensing saturates the uplink capacity of existing wireless networks, introducing latency spikes that degrade the DT synchronization fidelity~\cite{Zheng2022DTCPS,Sun2022DTSync}.
Third, modern DTs increasingly rely on large-scale AI models---foundation models~\cite{Bommasani2021Foundation}, vision transformers~\cite{Dosovitskiy2021ViT}, and generative world models~\cite{Ha2018WorldModels}---whose inference demands exceed the memory and computation budgets of resource-constrained Edge servers~\cite{Shi2016EdgeComputing,Mao2017MEC,Li2022SplitLearning}.

The Integrated Sensing and Communication (ISAC) paradigm offers a compelling path toward eliminating the need for dedicated sensor hardware by enabling a single waveform to perform both radar-like sensing and data communication simultaneously~\cite{Liu2022ISACSurvey,Zhang2021ISAC6G,Cui2021ISAC}.
Under ISAC, a base station or Edge node transmits a dual-function signal that illuminates the physical environment; the reflected echoes are processed to extract spatial, kinematic, and material properties of surrounding objects, while the same transmission carries data payloads to user equipment.
This \emph{sensorless monitoring} capability has the potential to dramatically reduce the sensor density required for DT operation.
However, merely replacing sensors with ISAC signals is insufficient.
The telemetry extracted by ISAC must be communicated to the Edge, stored in the limited memory alongside the AI model weights, and processed by the CPU/GPU to execute the inference pipeline that updates the DT.
These four operations---Sensing (S), Memory (M), Communication (C), and Computation (C)---are tightly coupled through shared radio, memory, and compute resources, yet existing works address them in isolation~\cite{Liu2022ISACSurvey,Chen2023ISAC}.

The SMCC paradigm, recently introduced to unify the resource allocation across sensing, memory, communication, and computation layers~\cite{Zhu2024SMCC,Letaief2024SMCC}, provides the missing architectural abstraction.
By explicitly modeling the cross-layer dependencies among the four dimensions, SMCC enables joint optimization that captures the fundamental trade-offs: allocating more power to sensing improves telemetry accuracy but reduces the signal-to-noise ratio (SNR) for communication; loading a larger AI model into Edge memory improves inference quality but leaves less buffer for incoming data; and dedicating more CPU cycles to model inference accelerates DT updates but increases energy consumption and thermal throttling.

In this paper, we leverage the SMCC paradigm to design SMCC-DT, a holistic framework for AI-driven Digital Twin maintenance in large-scale IoT systems.
Our contributions are as follows:

\begin{itemize}
\item We propose the SMCC-DT architecture, an end-to-end pipeline that replaces dedicated sensor networks with ISAC-based sensorless monitoring and integrates Sensing, Memory, Communication, and Computation resources at the 6G Edge to maintain a continuously synchronized Digital Twin.
\item We formulate a cross-layer optimization problem that jointly allocates ISAC transmit power, beamforming design, Edge memory partitioning, and CPU-frequency scaling to minimize the DT synchronization latency under sensing accuracy, throughput, memory, and computation constraints.
\item We prove that the resulting problem is NP-hard and design \textsc{SmccAgent}, a PPO-based DRL algorithm that exploits the problem structure through a hierarchical action space and domain-specific reward shaping to converge to near-optimal policies within practical time budgets.
\item We conduct extensive simulations on a 500-node industrial IoT testbed, demonstrating that SMCC-DT reduces synchronization latency by 38.7\%, energy consumption by 27.4\%, and memory fragmentation by 52.1\% compared to orthogonal allocation, compute-only optimization, and heuristic baselines.
\end{itemize}

The remainder of this paper is organized as follows.
Section~\ref{sec:related} surveys the related literature.
Section~\ref{sec:system} presents the SMCC-DT system model.
Section~\ref{sec:problem} formulates the cross-layer optimization problem.
Section~\ref{sec:algorithm} describes the proposed DRL-based algorithm.
Section~\ref{sec:evaluation} reports the performance evaluation.
Section~\ref{sec:conclusion} concludes the paper.

\section{Related Work}
\label{sec:related}

\subsection{Digital Twins for IoT Systems}
The Digital Twin concept, originating from NASA's Apollo program, has evolved into a full-lifecycle virtual replica paradigm for cyber-physical systems~\cite{Grieves2014DT,Glaessgen2012DT}.
Fuller \emph{et al.}~\cite{Fuller2020DT} provided a comprehensive taxonomy of DT architectures, distinguishing between model-driven and data-driven approaches.
Tao \emph{et al.}~\cite{Tao2019DT5D} introduced the five-dimensional DT model, emphasizing the role of data fusion and service integration.
In the IoT context, Lu \emph{et al.}~\cite{Lu2020DTEdge} proposed an Edge-based DT framework that offloads model inference to nearby servers, reducing the round-trip latency for safety-critical applications.
Minerva \emph{et al.}~\cite{Minerva2020DTIoT} surveyed the integration of DT technology with IoT middleware, identifying scalability and real-time synchronization as the primary open challenges.
However, all the aforementioned works assume the availability of dedicated sensor infrastructure, leaving the cost and bandwidth implications unaddressed.

\subsection{Security Digital Twins}
A complementary research direction examines Digital Twins as a vehicle for cybersecurity analysis.
Baiardi and Sammartino~\cite{Sammartino2025NotLine} introduced the Security Digital Twin (SDT) concept, wherein a virtual replica of a networked infrastructure is continuously updated to reflect its current security posture, including vulnerability states, patch levels, and access-control configurations.
Building on this, Sammartino \emph{et al.}~\cite{Sammartino2025NotLine} developed NotLine, a non-intrusive automated platform that constructs an SDT from passive network observations without injecting test traffic into production systems.
These contributions underscore the breadth of the DT paradigm; our SMCC-DT framework complements them by addressing the \emph{physical-layer} infrastructure needed to keep any DT---including security-oriented ones---synchronized without dedicated sensor networks.

\subsection{Integrated Sensing and Communication (ISAC)}
The ISAC paradigm has attracted substantial research interest as a key enabler of 6G networks~\cite{Liu2022ISACSurvey,Zhang2021ISAC6G,Cui2021ISAC}.
Liu \emph{et al.}~\cite{Liu2022ISACSurvey} provided an extensive survey of dual-function radar-communication (DFRC) waveform design, covering both shared and separated antenna architectures.
Zhang \emph{et al.}~\cite{Zhang2021ISAC6G} positioned ISAC within the broader 6G roadmap, arguing that perceptive networks will become a native feature of future cellular systems.
At the signal processing level, Chiriyath \emph{et al.}~\cite{Chiriyath2017ISAC} derived the fundamental performance bounds relating radar estimation accuracy (via the Cram\'{e}r--Rao Lower Bound) to communication capacity (via the Shannon limit) under shared spectrum allocation.
Liu \emph{et al.}~\cite{Liu2020ISACBeam} extended this framework to multiuser MIMO configurations with joint transmit beamforming.
Kumari \emph{et al.}~\cite{Kumari2019ISAC5G} demonstrated the feasibility of joint vehicular communication-radar using IEEE 802.11ad waveforms, while Dai \emph{et al.}~\cite{Dai2020ISAC} explored hybrid precoding for millimeter-wave massive MIMO with simultaneous information and power transfer.
More recently, Chen \emph{et al.}~\cite{Chen2023ISAC} proposed a joint beamforming design for ISAC systems that balances sensing and communication quality of service, and Liu \emph{et al.}~\cite{Liu2023ISACDRL} applied DRL to resource allocation for joint radar-communication in vehicular networks.
While these works provide the physical-layer foundations, none of them considers the downstream memory and computation constraints imposed by AI-driven DT workloads, which is the gap our SMCC-DT framework fills.

\subsection{Edge Intelligence and SMCC}
The convergence of AI and Edge Computing has given rise to Edge Intelligence, wherein deep learning models are deployed on resource-constrained Edge servers for real-time inference~\cite{Shi2016EdgeComputing,Mao2017MEC,Wang2020EdgeAI}.
The vision of 6G networks as AI-native platforms has been articulated by Saad \emph{et al.}~\cite{Saad2020_6G}, Tariq \emph{et al.}~\cite{Tariq2020_6G}, and Zhang \emph{et al.}~\cite{Zhang2019_6G}, all emphasizing the tight integration of sensing, communication, and computing.
Letaief \emph{et al.}~\cite{Letaief2024SMCC} recently articulated the SMCC vision, arguing that sensing, memory, communication, and computation must be co-designed rather than treated as independent layers.
Zhu \emph{et al.}~\cite{Zhu2024SMCC} formalized this perspective with a cross-layer resource allocation framework for SMCC pipelines, demonstrating significant efficiency gains over layered approaches.
Xu \emph{et al.}~\cite{Xu2023SMCCDT} further extended task-oriented sensing--communication--computing integration to multi-device edge AI scenarios.
Lyu \emph{et al.}~\cite{Lyu2023ISACDT} proposed an integrated sensing, communication, and computation framework specifically for over-the-air Digital Twin updates, while Zheng \emph{et al.}~\cite{Zheng2022DTCPS} addressed communication-efficient DT synchronization through semantic compression.
Wang \emph{et al.}~\cite{Wang2022DTResource} and Xu \emph{et al.}~\cite{Xu2021DTMEC} investigated DT-assisted resource allocation for IoT systems and energy-harvesting mobile edge computing, respectively.
Chen \emph{et al.}~\cite{Chen2021FedEdge} proposed a joint learning and communications framework for federated learning over wireless networks, while Wen \emph{et al.}~\cite{Wen2023SemCom} surveyed semantic communication as a bandwidth-efficient paradigm for AI-driven applications.
In the DRL domain, Mnih \emph{et al.}~\cite{Mnih2015DQN} pioneered deep Q-networks for discrete action spaces, while Schulman \emph{et al.}~\cite{Schulman2017PPO} introduced PPO for continuous control with stable policy updates.
Luong \emph{et al.}~\cite{Luong2019DRLNetwork} surveyed DRL applications in networking and resource management, establishing DRL as a viable optimizer for NP-hard wireless resource allocation problems.
He \emph{et al.}~\cite{He2017DRLResource} and Xiong \emph{et al.}~\cite{Xiong2019DRL5G} demonstrated DRL-based resource management for software-defined networks and 5G/beyond systems.
Almasan \emph{et al.}~\cite{Almasan2022DRLNetwork} combined DRL with graph neural networks for routing optimization, and Zappone \emph{et al.}~\cite{Zappone2019DVFS} analyzed the complementary roles of model-based and AI-based approaches in wireless network design.
Yang \emph{et al.}~\cite{Yang2020IntelligentRIS} investigated energy-efficient wireless communications using reconfigurable intelligent surfaces (RIS), a technology that our future work aims to integrate into the SMCC-DT pipeline.

\begin{figure*}[!t]
\centering
\begin{tikzpicture}[
  >=Stealth,
  block/.style={
    rectangle, draw, rounded corners=4pt,
    minimum width=1.9cm, minimum height=1.1cm,   
    font=\footnotesize\sffamily, align=center, thick
  },
  bigblock/.style={
    rectangle, draw, rounded corners=6pt,
    minimum width=3.0cm, minimum height=1.4cm,
    font=\footnotesize\sffamily, align=center,
    very thick, line width=1.2pt
  },
  arrow/.style={->, thick, >=Stealth},
  dasharrow/.style={->, thick, dashed, >=Stealth},
  lbl/.style={font=\scriptsize\sffamily, align=center},
  phase/.style={font=\scriptsize\sffamily\bfseries, rounded corners=2pt,
    inner sep=3pt, minimum height=0.5cm}
]

\node[phase, fill=sensing!20,  text=sensing] at (0,   3.0) {SENSING};
\node[phase, fill=commu!20,    text=commu]   at (5.2, 3.0) {COMMUNICATION};
\node[phase, fill=memory!20,   text=memory]  at (10.2,3.0) {MEMORY};
\node[phase, fill=compu!20,    text=compu]   at (14.4,3.0) {COMPUTATION};

\node[bigblock, fill=lightbg, text=edgecol]          (env) at (0,    1.8) {Physical\\Environment\\($\mathcal{K}$ assets)};
\node[bigblock, fill=sensing!12, draw=sensing]        (bs)  at (0,   -0.2) {6G Base Station\\ISAC Transmitter};

\node[block, fill=sensing!15, draw=sensing]  (sense) at ( 3.0, -0.2) {Radar\\Processing\\$\hat{\boldsymbol{\theta}}_k$};
\node[block, fill=commu!15,   draw=commu]    (chan)  at ( 5.5, -0.2) {Wireless\\Channel\\$\mathbf{H}$};
\node[block, fill=commu!15,   draw=commu]    (rx)   at ( 7.8, -0.2) {Edge\\Receiver\\$\hat{\mathbf{s}}$};

\node[block, fill=memory!15, draw=memory] (mem1) at (10.2,  1.5) {Model\\Weights\\$M_{\text{model}}$};
\node[block, fill=memory!15, draw=memory] (mem2) at (10.2, -0.2) {Data\\Buffer\\$M_{\text{buf}}$};

\node[block,    fill=compu!15, draw=compu]             (cpu) at (13.0, 0.8) {AI Inference\\Engine\\$f_{\text{cpu}}$};
\node[bigblock, fill=dtcol!20, draw=dtcol, line width=1.5pt] (dt) at (16.3, 0.8) {Digital Twin\\State $\mathbf{s}_k^{(t)}$};

\draw[arrow, sensing] (bs.north) -- node[left, lbl, text=sensing]{ISAC\\signal} (env.south);

\draw[arrow, sensing,
      decorate, decoration={snake, amplitude=1.2mm, segment length=4mm}]
  (env.east) -- node[above right, lbl, text=sensing, pos=0.5]{Echoes} (sense.north);

\draw[arrow, commu] (bs.east)    -- node[below, lbl, text=commu]{Data} (sense.west);
\draw[arrow, commu] (sense.east) -- (chan.west);
\draw[arrow, commu] (chan.east)  -- (rx.west);

\draw[arrow, memory] (rx.east)  -- (mem2.west);          
\draw[arrow, memory] (rx.north) |- (mem1.west);          

\draw[arrow, compu] (mem1.east) -- (cpu.west);
\draw[arrow, compu] (mem2.east) -| (cpu.south);

\draw[arrow, dtcol, line width=1.5pt]
  (cpu.east) -- node[above, lbl, text=dtcol]{Update} (dt.west);

\draw[dasharrow, edgecol]
  (dt.south) -- ++(0,-1.2) -|
  node[below, lbl, pos=0.25, text=edgecol]{Policy feedback $\pi_\phi$}
  (bs.south);

\begin{scope}[on background layer]
  \node[draw=memory, dashed, thick, rounded corners=6pt,
    fit=(mem1)(mem2), inner sep=6pt,
    label={[font=\scriptsize\sffamily, text=memory]above:Edge Memory $M_{\max}$}] {};
\end{scope}

\end{tikzpicture}
\caption{The SMCC-DT architecture. A 6G base station transmits ISAC waveforms that sense
$K$ physical assets and communicate the extracted telemetry to the Edge server.
The Edge memory is partitioned between AI model weights and the data buffer, while the
CPU executes the inference engine that updates the Digital Twin state.
A DRL policy $\pi_\phi$ feeds back resource allocation decisions to the ISAC transmitter.}
\label{fig:architecture}
\end{figure*}

\section{System Model}
\label{sec:system}

We consider an industrial IoT environment comprising a set $\mathcal{K} = \{1, 2, \ldots, K\}$ of physical assets (e.g., robotic arms, conveyor segments, autonomous vehicles) monitored by a DT hosted on an Edge server.
A 6G base station (BS), co-located with the Edge server, transmits ISAC waveforms that simultaneously sense the physical environment and communicate data payloads to IoT devices.
Fig.~\ref{fig:architecture} illustrates the SMCC-DT architecture.

\subsection{Sensing Model}
\label{subsec:sensing}

The BS transmits a dual-function ISAC waveform $\mathbf{x}(t) \in \mathbb{C}^{N_t \times 1}$ through $N_t$ antennas with beamforming matrix $\mathbf{W} = [\mathbf{w}_s, \mathbf{w}_c] \in \mathbb{C}^{N_t \times 2}$, where $\mathbf{w}_s$ and $\mathbf{w}_c$ denote the sensing and communication beamforming vectors, respectively.
The total transmit power is bounded by $\|\mathbf{w}_s\|^2 + \|\mathbf{w}_c\|^2 \leq P_{\max}$.

For the $k$-th physical asset located at position $\mathbf{p}_k \in \real^3$, the reflected echo received by the BS is
\begin{equation}
\mathbf{y}_k^{(s)}(t) = \alpha_k \, \mathbf{a}(\theta_k, \phi_k) \, \mathbf{a}^H(\theta_k, \phi_k) \, \mathbf{w}_s \, x(t - \tau_k) + \mathbf{n}_s(t),
\label{eq:echo}
\end{equation}
where $\alpha_k$ is the complex radar cross-section coefficient, $\mathbf{a}(\theta_k, \phi_k)$ is the steering vector at azimuth $\theta_k$ and elevation $\phi_k$, $\tau_k = 2\|\mathbf{p}_k\|/c$ is the round-trip delay, and $\mathbf{n}_s(t) \sim \mathcal{CN}(\mathbf{0}, \sigma_s^2 \mathbf{I})$ is additive white Gaussian noise.

The sensing accuracy for estimating the parameter vector $\boldsymbol{\theta}_k = [\theta_k, \phi_k, \tau_k, f_{d,k}]^T$ (including Doppler shift $f_{d,k}$) is bounded by the Cram\'{e}r--Rao Lower Bound (CRLB):
\begin{equation}
\text{Var}(\hat{\boldsymbol{\theta}}_k) \geq \mathbf{J}^{-1}(\boldsymbol{\theta}_k),
\label{eq:crlb}
\end{equation}
where $\mathbf{J}(\boldsymbol{\theta}_k)$ is the Fisher Information Matrix (FIM).
For the monostatic ISAC configuration, the FIM depends on the sensing beamforming power $\rho_s = \|\mathbf{w}_s\|^2$ through~\cite{Chiriyath2017ISAC}:
\begin{equation}
\text{tr}\bigl(\mathbf{J}^{-1}(\boldsymbol{\theta}_k)\bigr) = \frac{\sigma_s^2}{|\alpha_k|^2 \, \rho_s \, T_s \, B_s} \triangleq \epsilon_k(\rho_s),
\label{eq:sensing_acc}
\end{equation}
where $T_s$ is the sensing integration time and $B_s$ is the sensing bandwidth.
We define the sensing quality requirement as $\epsilon_k(\rho_s) \leq \epsilon_{\max}$, ensuring a minimum estimation accuracy.

\subsection{Communication Model}
\label{subsec:communication}

Simultaneously, the communication component of the ISAC waveform carries the sensed telemetry $\hat{\boldsymbol{\theta}}_k$ from the BS to the Edge server over a channel $\mathbf{H} \in \mathbb{C}^{N_r \times N_t}$.
The received signal at the Edge is
\begin{equation}
\mathbf{y}^{(c)}(t) = \mathbf{H} \, \mathbf{w}_c \, s_c(t) + \underbrace{\mathbf{H} \, \mathbf{w}_s \, x(t)}_{\text{sensing interference}} + \mathbf{n}_c(t),
\label{eq:comm_rx}
\end{equation}
where $s_c(t)$ is the communication symbol and $\mathbf{n}_c(t) \sim \mathcal{CN}(\mathbf{0}, \sigma_c^2 \mathbf{I})$.
After interference cancellation, the achievable data rate is
\begin{equation}
R = B_c \log_2 \!\left(1 + \frac{\|\mathbf{H} \mathbf{w}_c\|^2}{\sigma_c^2 + \gamma \|\mathbf{H} \mathbf{w}_s\|^2}\right),
\label{eq:rate}
\end{equation}
where $B_c$ is the communication bandwidth and $\gamma \in [0,1]$ is the residual interference factor after sensing signal cancellation.
The communication constraint requires $R \geq R_{\min}$, ensuring sufficient throughput for DT telemetry delivery.

\subsection{Memory Model}
\label{subsec:memory}

The Edge server has a total memory capacity $M_{\max}$ (in bytes), which must be partitioned between two competing demands:

\begin{enumerate}
\item \emph{AI Model Storage} ($M_{\text{model}}$): The DT inference engine requires loading the weights of a large-scale AI model (e.g., a vision transformer or a physics-informed neural network). For a model with $L$ layers and parameter count $\Omega$, the memory footprint is
\begin{equation}
M_{\text{model}} = \beta \cdot \Omega + \sum_{\ell=1}^{L} A_\ell,
\label{eq:mem_model}
\end{equation}
where $\beta$ is the bytes-per-parameter (e.g., $\beta = 2$ for FP16 quantization) and $A_\ell$ is the activation memory for layer $\ell$.

\item \emph{Data Buffer} ($M_{\text{buf}}$): Incoming telemetry data from the ISAC pipeline must be buffered before inference.
For $K$ assets sampled at rate $f_s$ with per-sample size $d_k$, the buffer requirement over one DT update cycle of duration $T_{\text{cycle}}$ is
\begin{equation}
M_{\text{buf}} = \sum_{k=1}^{K} d_k \cdot f_s \cdot T_{\text{cycle}}.
\label{eq:mem_buf}
\end{equation}
\end{enumerate}

The memory constraint is $M_{\text{model}} + M_{\text{buf}} \leq M_{\max}$.
A trade-off exists: allocating more memory to the model (larger $\Omega$) improves inference accuracy but reduces the buffer capacity, potentially causing data drops.

\subsection{Computation Model}
\label{subsec:computation}

The Edge CPU operates at a tunable frequency $f_{\text{cpu}} \in [f_{\min}, f_{\max}]$ (in cycles per second).
Processing the buffered telemetry through the AI model of size $\Omega$ requires
\begin{equation}
C_{\text{inf}} = \kappa \cdot \Omega \cdot K
\label{eq:comp_cycles}
\end{equation}
CPU cycles, where $\kappa$ is the cycles-per-parameter-per-asset factor that depends on the model architecture.
The inference latency is
\begin{equation}
T_{\text{inf}} = \frac{C_{\text{inf}}}{f_{\text{cpu}}} = \frac{\kappa \, \Omega \, K}{f_{\text{cpu}}},
\label{eq:inf_latency}
\end{equation}
and the associated energy consumption follows the cubic DVFS model~\cite{Mao2017MEC}:
\begin{equation}
E_{\text{inf}} = \xi \, C_{\text{inf}} \, f_{\text{cpu}}^2,
\label{eq:inf_energy}
\end{equation}
where $\xi$ is the effective capacitance coefficient of the processor.

\begin{figure}[!t]
\centering
\begin{tikzpicture}[
  >=Stealth,
  block/.style={rectangle, draw, rounded corners=3pt, minimum width=1.4cm,
    minimum height=0.6cm, font=\scriptsize\sffamily, align=center, thick},
  arrow/.style={->, thick},
  lbl/.style={font=\tiny\sffamily, align=center}
]

\node[block, fill=sensing!15, draw=sensing] (tx) at (0,0) {ISAC\\Tx};

\node[block, fill=sensing!10, draw=sensing] (bfs) at (0,  1.3) {$\mathbf{w}_s$\\Sensing};
\node[block, fill=commu!10,   draw=commu]   (bfc) at (0, -1.3) {$\mathbf{w}_c$\\Comms};

\draw[arrow, sensing] (tx.north) -- (bfs.south);
\draw[arrow, commu]   (tx.south) -- (bfc.north);

\node[draw=edgecol, thick, rounded corners=3pt, fill=lightbg,
  font=\tiny\sffamily, align=center, inner sep=3pt] (pwr) at (0, -2.3)
  {$\|\mathbf{w}_s\|^2 + \|\mathbf{w}_c\|^2 \leq P_{\max}$};

\node[block, fill=dtcol!15, draw=dtcol, minimum width=1.8cm]
      (tgt) at (2.8, 1.3) {Target $k$\\$\mathbf{p}_k, \alpha_k$};

\draw[arrow, sensing] (bfs.east) -- node[above, lbl]{$\mathbf{x}(t)$} (tgt.west);
\coordinate (refl) at (2.8, 0.6);
\draw[arrow, sensing, decorate, decoration={snake, amplitude=0.6mm, segment length=2.5mm}]
  (tgt.south) -- (refl) -- (1.1, 0.6) node[above, lbl]{$\mathbf{y}_k^{(s)}$};

\node[block, fill=sensing!20, draw=sensing] (sp) at (2.8, -0.1) {FIM\\$\mathbf{J}(\boldsymbol{\theta}_k)$};
\draw[arrow, sensing] (refl) -- (sp.north);

\node[font=\scriptsize\sffamily, text=sensing] (crlb) at (5.0, -0.1)
      {$\epsilon_k \leq \epsilon_{\max}$};
\draw[arrow, sensing] (sp.east) -- (crlb.west);

\node[block, fill=commu!15, draw=commu] (ch) at (2.8, -1.3) {Channel\\$\mathbf{H}$};
\draw[arrow, commu] (bfc.east) -- node[above, lbl]{$s_c(t)$} (ch.west);

\node[block, fill=commu!20, draw=commu] (erx) at (5.0, -1.3) {Edge Rx};
\draw[arrow, commu] (ch.east) -- node[above, lbl]{$\mathbf{y}^{(c)}$} (erx.west);

\node[font=\scriptsize\sffamily, text=commu] (rate) at (5.0, -2.1)
      {$R \geq R_{\min}$};
\draw[arrow, commu] (erx.south) -- (rate.north);

\begin{scope}[on background layer]
  \node[draw=gridgray, dashed, rounded corners=4pt, fit=(tx)(bfs)(bfc)(pwr), inner sep=4pt] (txbox) {};
  \node[font=\tiny\sffamily, text=gray, above right] at (txbox.south west) {Base Station};
\end{scope}

\end{tikzpicture}
\caption{ISAC signal model. The dual-function waveform splits into sensing ($\mathbf{w}_s$) and communication ($\mathbf{w}_c$) paths. Both share the power budget $P_{\max}$, while satisfying CRLB and rate constraints.}
\label{fig:isac_model}
\end{figure}

\subsection{DT Synchronization Latency}
\label{subsec:sync}

The end-to-end DT synchronization latency for one update cycle comprises four sequential stages aligned with the SMCC dimensions:
\begin{equation}
\Tsync = \underbrace{T_s}_{\text{Sensing}} + \underbrace{\frac{D_{\text{tel}}}{R}}_{\text{Communication}} + \underbrace{T_{\text{load}}(\Omega)}_{\text{Memory}} + \underbrace{\frac{\kappa \Omega K}{f_{\text{cpu}}}}_{\text{Computation}},
\label{eq:tsync}
\end{equation}
where $D_{\text{tel}} = \sum_k |\hat{\boldsymbol{\theta}}_k| \cdot b$ is the total telemetry payload (bits), and $T_{\text{load}}(\Omega) = M_{\text{model}} / B_{\text{mem}}$ is the model loading time from storage to active memory at bandwidth $B_{\text{mem}}$.

\section{Cross-Layer Optimization Problem}
\label{sec:problem}

We seek to minimize the DT synchronization latency $\Tsync$ by jointly optimizing the sensing power $\rho_s$, the communication beamformer $\mathbf{w}_c$, the memory partition $(M_{\text{model}}, M_{\text{buf}})$, and the CPU frequency $f_{\text{cpu}}$.
Let $\boldsymbol{x} = (\rho_s, \mathbf{w}_c, M_{\text{model}}, M_{\text{buf}}, f_{\text{cpu}})$ denote the decision vector.

\begin{figure}[!t]
\centering
\begin{tikzpicture}[
  >=Stealth,
  scale=1, every node/.style={transform shape},
  layer/.style={
    rectangle, draw, rounded corners=5pt,
    minimum width=3.3cm, minimum height=1.0cm,
    font=\footnotesize\sffamily, align=center, thick
  },
  coupling/.style={<->, thick, densely dashed},
  clabel/.style={font=\tiny\sffamily, fill=white, inner sep=1.5pt}
]

\node[layer, fill=sensing!15, draw=sensing] (S) at (0, 4.5) {Sensing\\$\rho_s, T_s, B_s$};
\node[layer, fill=commu!15, draw=commu] (C1) at (0, 3.0) {Communication\\$\mathbf{w}_c, R, B_c$};
\node[layer, fill=memory!15, draw=memory] (M) at (0, 1.5) {Memory\\$M_{\text{model}}, M_{\text{buf}}$};
\node[layer, fill=compu!15, draw=compu] (C2) at (0, 0.0) {Computation\\$f_{\text{cpu}}, \Omega, \kappa$};

\draw[coupling, sensing!70!black] (S.east) -- ++(1.2,0) |- node[clabel, pos=0.25] {Power} (C1.east);
\draw[coupling, commu!70!black] (C1.east) -- ++(1.8,0) |- node[clabel, pos=0.25] {Throughput $\to$ Buffer} (M.east);
\draw[coupling, memory!70!black] (M.east) -- ++(1.2,0) |- node[clabel, pos=0.25] {Model size $\to$ Cycles} (C2.east);
\draw[coupling, sensing!50!compu] (S.west) -- ++(-1.5,0) |- node[clabel, pos=0.25] {Accuracy $\to$ Model} (C2.west);

\node[draw=dtcol, fill=dtcol!10, very thick, rounded corners=4pt,
  font=\footnotesize\sffamily, align=center, minimum width=4cm] (obj) at (0, -1.5) 
  {$\min_{\boldsymbol{x}} \; \Tsync(\boldsymbol{x})$};
\draw[->, very thick, dtcol] (C2.south) -- (obj.north);

\end{tikzpicture}
\caption{Cross-layer coupling in the SMCC optimization. Each pair of adjacent layers shares resources or imposes constraints on the other. The dashed arrows indicate the key trade-off channels: power allocation between Sensing and Communication, throughput-to-buffer coupling between Communication and Memory, model size coupling between Memory and Computation, and the end-to-end accuracy-to-model dependency between Sensing and Computation.}
\label{fig:coupling}
\end{figure}

The optimization problem is formulated as:
\begin{subequations}
\label{eq:P1}
\begin{align}
(\mathcal{P}_1): \quad & \min_{\boldsymbol{x}} \quad \Tsync(\boldsymbol{x}) \label{eq:P1obj}\\
\text{s.t.} \quad
& \epsilon_k(\rho_s) \leq \epsilon_{\max}, \quad \forall k \in \mathcal{K}, \label{eq:P1c1}\\
& R(\mathbf{w}_c, \rho_s) \geq R_{\min}, \label{eq:P1c2}\\
& \rho_s + \|\mathbf{w}_c\|^2 \leq P_{\max}, \label{eq:P1c3}\\
& M_{\text{model}} + M_{\text{buf}} \leq M_{\max}, \label{eq:P1c4}\\
& M_{\text{model}} \geq \beta \, \Omega_{\min}, \label{eq:P1c5}\\
& f_{\min} \leq f_{\text{cpu}} \leq f_{\max}, \label{eq:P1c6}\\
& E_{\text{inf}}(f_{\text{cpu}}, \Omega) \leq E_{\max}, \label{eq:P1c7}\\
& \Omega \in \mathcal{Z}^+, \label{eq:P1c8}
\end{align}
\end{subequations}
where~\eqref{eq:P1c1} ensures sensing accuracy,~\eqref{eq:P1c2} guarantees minimum communication throughput,~\eqref{eq:P1c3} enforces the power budget,~\eqref{eq:P1c4} respects the memory capacity,~\eqref{eq:P1c5} requires a minimum model size for acceptable inference quality,~\eqref{eq:P1c6} bounds the CPU frequency,~\eqref{eq:P1c7} limits the energy budget, and~\eqref{eq:P1c8} restricts the model parameter count to integers (corresponding to discrete model variants).

\begin{theorem}[NP-Hardness]
\label{thm:nphard}
Problem $\mathcal{P}_1$ is NP-hard.
\end{theorem}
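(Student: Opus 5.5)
The plan is to prove hardness by \emph{restriction}: exhibit a family of instances of $\mathcal{P}_1$ on which the sensing and communication variables are trivially determined, and show that the remaining memory--computation subproblem encodes the 0--1 Knapsack (or Subset-Sum) problem. Since restricting an optimization problem to a subclass of its instances can only make it easier, NP-hardness of the restricted class implies NP-hardness of $\mathcal{P}_1$.

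\emph{Step 1 (decoupling the radio layers).} I choose $\gamma=0$, $\epsilon_{\max}$ large, and $R_{\min}$ small enough that constraints~\eqref{eq:P1c1}--\eqref{eq:P1c3} can be satisfied. The sensing stage then only requires $\rho_s\ge \sigma_s^2/(\min_k|\alpha_k|^2 T_s B_s\epsilon_{\max})$ by~\eqref{eq:sensing_acc}. With $\gamma=0$, the rate~\eqref{eq:rate} is increasing in $\|\mathbf{w}_c\|^2$, so the optimal choice sets $\rho_s$ at this lower bound and aligns $\mathbf{w}_c$ with the principal right singular vector of $\mathbf{H}$ with power $P_{\max}-\rho_s$. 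This is a closed-form, polynomial-time step. After it, the terms $T_s$ and $D_{\text{tel}}/R$ in~\eqref{eq:tsync} are constants.

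\emph{Step 2 (memory--computation core).} With $M_{\text{buf}}$ fixed by~\eqref{eq:mem_buf}, the problem reduces to choosing the discrete model $\Omega$ and the frequency $f_{\text{cpu}}$. For fixed $\Omega$, the objective is decreasing in $f_{\text{cpu}}$. The energy constraint~\eqref{eq:P1c7} together with~\eqref{eq:inf_energy} gives
\[
f_{\text{cpu}}^\star(\Omega)=\min\Bigl\{f_{\max},\sqrt{E_{\max}/(\xi\kappa\Omega K)}\Bigr\}.
\]
This leaves a one-dimensional integer program in $\Omega$, subject to~\eqref{eq:P1c4} and~\eqref{eq:P1c5}.

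\emph{Step 3 (encoding Knapsack).} Following the paper's interpretation of~\eqref{eq:P1c8} as ``discrete model variants'', I model a variant as a selection of $n$ optional modules (layers or experts). Module $i$ carries parameter count $\omega_i$ and activation memory $A_i$, so that $\Omega=\sum_i x_i\omega_i$ with $x_i\in\{0,1\}$, and $M_{\text{model}}$ follows from~\eqref{eq:mem_model}. Given a Subset-Sum instance $(a_1,\dots,a_n;t)$, I set $\beta\omega_i+A_i=a_i$ and $M_{\max}-M_{\text{buf}}=t$, so that~\eqref{eq:P1c4} reads $\sum_i x_i a_i\le t$. I also set $\beta\Omega_{\min}=t$, so that~\eqref{eq:P1c5} forces $\sum_i x_i a_i\ge t$. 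Feasibility of the restricted $\mathcal{P}_1$ is then exactly equivalent to the existence of a subset summing to $t$. Deciding feasibility, and hence optimizing, is therefore NP-hard. The construction uses $O(n)$ numbers of polynomial bit length, so the reduction runs in polynomial time.

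\emph{Main obstacle.} The hard step is Step 3. As literally written, $\mathcal{P}_1$ has a single scalar integer $\Omega$. With everything else solved in closed form as above, optimizing over one integer variable is easy, because the objective is monotone in $\Omega$ once $f_{\text{cpu}}^\star$ is substituted. The combinatorial structure therefore has to come from the modular, variant-based reading of $\Omega$ and $M_{\text{model}}$. I would make this modelling assumption explicit in the proof, stating that the set of admissible variants is given implicitly by module choices, and verify that it is consistent with~\eqref{eq:mem_model}. If that interpretation is not accepted, a fallback is the per-asset extension, in which each asset $k$ selects its own variant $\Omega_k$ under a shared memory budget. This yields a multiple-choice knapsack through the same chain of constraints.
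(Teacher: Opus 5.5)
Your proposal follows the paper's overall strategy: restrict $\mathcal{P}_1$ to a memory/model-selection core with the radio and CPU variables pinned down, then appeal to Knapsack. It departs from the paper at the one step that carries the hardness, and there your version is the one that works. The paper takes $K=1$, freezes $\rho_s$ and $f_{\text{cpu}}$, and lets the decision be the choice of a single variant from an explicit list $\{\Omega_1,\ldots,\Omega_N\}$ with footprints $m_i$ and qualities $q_i$. It then asserts that this ``reduces to a knapsack instance.'' Picking one item from an explicitly listed set is solved by checking all $N$ candidates, so that restricted problem is polynomial. Reducing it \emph{to} Knapsack would in any case only bound its difficulty from above. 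Your implicit, module-based variant set ($\Omega=\sum_i x_i\omega_i$) is the missing ingredient: it makes the choice set exponentially large yet succinctly described. Using \eqref{eq:P1c4} and \eqref{eq:P1c5} as a two-sided constraint then forces $\sum_i x_i a_i=t$. This gives a genuine polynomial reduction \emph{from} Subset-Sum, using only constraints present in $\mathcal{P}_1$ rather than per-variant qualities $q_i$ that the formulation never defines.

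The price is the caveat you already raise, and it belongs in the theorem statement rather than in the proof. With a scalar $\Omega\in\mathcal{Z}^+$ as literally written, your Steps 1--2 solve $\mathcal{P}_1$ outright: the objective increases in $\Omega$, so one takes the smallest integer allowed by \eqref{eq:P1c4}, \eqref{eq:P1c5} and \eqref{eq:P1c7}. Hence the literal statement cannot be NP-hard unless $\mathrm{P}=\mathrm{NP}$. What you prove, and what the paper's argument implicitly needs too, is NP-hardness of the modular-variant formulation.

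Two small repairs are needed. First, choose $E_{\max}\ge\xi\kappa K f_{\min}^2\sum_i\omega_i$ so that \eqref{eq:P1c7} is slack at $f_{\min}$ for every subset; otherwise feasibility is not exactly Subset-Sum. Second, your per-asset fallback inherits hardness only if \eqref{eq:P1c5} (or some quality requirement) constrains the \emph{total} model memory. With per-asset lower bounds alone, the smallest admissible variant for each asset is optimal and no knapsack tension remains.
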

\begin{IEEEproof}
We prove NP-hardness by reduction from the 0-1 Knapsack problem.
Consider a simplified instance of $\mathcal{P}_1$ with a single asset ($K=1$), fixed sensing power $\rho_s$, and fixed CPU frequency $f_{\text{cpu}}$.
The remaining decision is to select a model variant $\Omega \in \{\Omega_1, \ldots, \Omega_N\}$ and allocate the memory partition $(M_{\text{model}}, M_{\text{buf}})$ subject to the capacity constraint $M_{\text{model}} + M_{\text{buf}} \leq M_{\max}$.
Each model variant $\Omega_i$ has a distinct inference quality $q_i$ and memory footprint $m_i$.
Selecting the model that minimizes latency while satisfying both memory and quality constraints reduces to a knapsack instance, which is known to be NP-hard~\cite{Garey1979NPHard}.
Since this restricted case is NP-hard, the general problem $\mathcal{P}_1$ is NP-hard \emph{a fortiori}.
\end{IEEEproof}

\section{DRL-Based SMCC Resource Allocation}
\label{sec:algorithm}

Given the NP-hardness of $\mathcal{P}_1$, we design a DRL agent, termed \textsc{SmccAgent}, that learns a policy $\pi_\phi(\mathbf{a}_t | \mathbf{o}_t)$ mapping real-time system observations to resource allocation actions.
We adopt the Proximal Policy Optimization (PPO) algorithm~\cite{Schulman2017PPO} for its stability under continuous action spaces and compatibility with constraint handling via reward shaping.

\subsection{Markov Decision Process Formulation}

We model the SMCC resource allocation as a Markov Decision Process (MDP) $(\mathcal{S}, \mathcal{A}, P, r, \gamma_d)$:

\emph{State space} $\mathcal{S}$: At time step $t$, the agent observes
\begin{equation}
\mathbf{o}_t = \bigl[\mathbf{h}_t,\, \boldsymbol{\epsilon}_t,\, m_t^{\text{free}},\, f_t^{\text{load}},\, q_t^{\text{buf}}\bigr],
\label{eq:state}
\end{equation}
where $\mathbf{h}_t \in \mathbb{R}^{N_r N_t}$ is the vectorized channel state, $\boldsymbol{\epsilon}_t = [\epsilon_1, \ldots, \epsilon_K]^T$ is the current sensing error vector, $m_t^{\text{free}}$ is the available memory, $f_t^{\text{load}}$ is the CPU load factor, and $q_t^{\text{buf}}$ is the buffer occupancy.

\emph{Action space} $\mathcal{A}$: The agent outputs a hierarchical action
\begin{equation}
\mathbf{a}_t = \bigl[\rho_s^{(t)},\, \mathbf{w}_c^{(t)},\, \Delta M^{(t)},\, f_{\text{cpu}}^{(t)},\, \omega^{(t)}\bigr],
\label{eq:action}
\end{equation}
where $\Delta M^{(t)}$ is the memory reallocation increment and $\omega^{(t)} \in \{1, \ldots, N\}$ is the model variant selector (discrete).

\emph{Reward}: We design a composite reward that balances latency minimization with constraint satisfaction:
\begin{equation}
r_t = -\lambda_1 \Tsync^{(t)} - \lambda_2 \, \Etot^{(t)} - \lambda_3 \sum_{i} \max(0, g_i(\boldsymbol{x}_t)),
\label{eq:reward}
\end{equation}
where $g_i(\boldsymbol{x}_t)$ are the constraint violation magnitudes from~\eqref{eq:P1c1}--\eqref{eq:P1c7} and $\lambda_1, \lambda_2, \lambda_3 > 0$ are weighting coefficients.

\subsection{Network Architecture}

The \textsc{SmccAgent} employs a dual-head architecture with shared feature extraction:

\begin{itemize}
\item A shared encoder comprising three fully connected layers (256--128--64 neurons) with ReLU activations that maps $\mathbf{o}_t$ to a latent representation $\mathbf{z}_t$.
\item An \emph{actor head} that produces the continuous actions $(\rho_s, \mathbf{w}_c, \Delta M, f_{\text{cpu}})$ via a Gaussian policy and the discrete action $\omega$ via a categorical distribution.
\item A \emph{critic head} that estimates the state-value function $V_\psi(\mathbf{o}_t)$.
\end{itemize}

\subsection{Training Procedure}

Algorithm~\ref{alg:ppo} summarizes the \textsc{SmccAgent} training loop.
The agent interacts with the SMCC-DT environment for $T$ episodes, collecting trajectories $\{(\mathbf{o}_t, \mathbf{a}_t, r_t)\}$ and updating the policy using the PPO clipped surrogate objective:
\begin{equation}
L^{\text{PPO}}(\phi) = \mathbb{E}_t \!\left[\min\!\left(\frac{\pi_\phi(\mathbf{a}_t|\mathbf{o}_t)}{\pi_{\phi_{\text{old}}}(\mathbf{a}_t|\mathbf{o}_t)} \hat{A}_t,\; \text{clip}(\cdot, 1\!-\!\varepsilon, 1\!+\!\varepsilon) \hat{A}_t\right)\right],
\label{eq:ppo_loss}
\end{equation}
where $\hat{A}_t$ is the Generalized Advantage Estimate (GAE) and $\varepsilon = 0.2$ is the clipping parameter.

\begin{algorithm}[!t]
\caption{\textsc{SmccAgent}: PPO-Based SMCC Resource Allocation}
\label{alg:ppo}
\begin{algorithmic}[1]
\REQUIRE Environment $\mathcal{E}$, policy $\pi_\phi$, critic $V_\psi$, episodes $T$
\STATE Initialize $\phi$, $\psi$ randomly
\FOR{episode $= 1, 2, \ldots, T$}
  \STATE Reset environment; observe $\mathbf{o}_0$
  \FOR{$t = 0, 1, \ldots, T_{\max}$}
    \STATE Sample $\mathbf{a}_t \sim \pi_\phi(\cdot | \mathbf{o}_t)$
    \STATE Execute $\mathbf{a}_t$; observe $\mathbf{o}_{t+1}$, $r_t$
    \STATE Store $(\mathbf{o}_t, \mathbf{a}_t, r_t, \mathbf{o}_{t+1})$ in buffer $\mathcal{B}$
  \ENDFOR
  \STATE Compute GAE advantages $\hat{A}_t$ from $\mathcal{B}$
  \FOR{$e = 1, \ldots, E_{\text{epochs}}$}
    \STATE Update $\phi$ via~\eqref{eq:ppo_loss} using mini-batches from $\mathcal{B}$
    \STATE Update $\psi$ via MSE loss: $L^V = \frac{1}{|\mathcal{B}|}\sum_t (V_\psi(\mathbf{o}_t) - \hat{R}_t)^2$
  \ENDFOR
\ENDFOR
\RETURN Trained policy $\pi_\phi$
\end{algorithmic}
\end{algorithm}

\begin{figure}[!t]
\centering
\begin{tikzpicture}[
  >=Stealth,
  scale=0.78, every node/.style={transform shape},
  block/.style={rectangle, draw, rounded corners=3pt, minimum width=1.4cm,
    minimum height=0.7cm, font=\scriptsize\sffamily, align=center, thick},
  wide/.style={rectangle, draw, rounded corners=3pt, minimum width=2.6cm,
    minimum height=0.7cm, font=\scriptsize\sffamily, align=center, thick},
  arrow/.style={->, thick}
]

\node[wide, fill=lightbg, draw=edgecol] (obs) at (0, -0.2) {Observation $\mathbf{o}_t$};

\node[wide, fill=edgecol!15, draw=edgecol] (fc1) at (0, 1.1) {FC 256, ReLU};
\node[wide, fill=edgecol!15, draw=edgecol] (fc2) at (0, 2.0) {FC 128, ReLU};
\node[wide, fill=edgecol!15, draw=edgecol] (fc3) at (0, 2.9) {FC 64,  ReLU};

\draw[arrow] (obs) -- (fc1);
\draw[arrow] (fc1) -- (fc2);
\draw[arrow] (fc2) -- (fc3);

\node[font=\scriptsize\sffamily, text=edgecol] at (1.75, 2.9) {$\mathbf{z}_t$};

\node[block, fill=commu!15,  draw=commu]  (act_c) at (-2.2, 4.1) {Gaussian\\$\mu,\sigma$};
\node[block, fill=sensing!15, draw=sensing](act_d) at ( 0.0, 4.1) {Categorical\\$p(\omega)$};
\node[block, fill=compu!15,  draw=compu]  (crit)  at ( 2.2, 4.1) {Value\\$V_\psi$};

\draw[arrow, commu]   (fc3.north) -- (act_c.south);
\draw[arrow, sensing] (fc3.north) -- (act_d.south);
\draw[arrow, compu]   (fc3.north) -- (crit.south);

\node[block, fill=commu!8,  draw=commu,  minimum width=1.9cm]
      (out_c) at (-2.2, 5.3) {$\rho_s,\mathbf{w}_c,\Delta M,f_{\text{cpu}}$};
\node[block, fill=sensing!8, draw=sensing]
      (out_d) at ( 0.0, 5.3) {$\omega^{(t)}$};
\node[block, fill=compu!8,  draw=compu]
      (out_v) at ( 2.2, 5.3) {$\hat{V}(\mathbf{o}_t)$};

\draw[arrow, commu]   (act_c) -- (out_c);
\draw[arrow, sensing] (act_d) -- (out_d);
\draw[arrow, compu]   (crit)  -- (out_v);

\node[font=\scriptsize\sffamily\bfseries, text=commu]   at (-2.2, 6.0) {Actor (Cont.)};
\node[font=\scriptsize\sffamily\bfseries, text=sensing]  at ( 0.0, 6.0) {Actor (Disc.)};
\node[font=\scriptsize\sffamily\bfseries, text=compu]   at ( 2.2, 6.0) {Critic};

\begin{scope}[on background layer]
\node[draw=edgecol, dashed, thick, rounded corners=5pt,
  fit=(fc1)(fc2)(fc3), inner xsep=8pt, inner ysep=6pt,
  label={[font=\scriptsize\sffamily\bfseries, text=edgecol]right:Shared Encoder}] {};
\end{scope}

\end{tikzpicture}
\caption{The \textsc{SmccAgent} dual-head neural network architecture.
A shared encoder maps observations to a latent representation $\mathbf{z}_t$,
which feeds two actor heads (Gaussian for continuous actions, Categorical for
model selection) and a critic head for value estimation.}
\label{fig:drl_arch}
\end{figure}

\section{Performance Evaluation}
\label{sec:evaluation}

\subsection{Simulation Setup}

We evaluate SMCC-DT through extensive Monte Carlo simulations on a 500-node industrial IoT testbed.
Table~\ref{tab:params} summarizes the key simulation parameters.
The physical environment models a $200 \times 200 \times 50$~m$^3$ smart factory with $K = 500$ assets distributed across three production zones.
The 6G BS is equipped with $N_t = 64$ transmit antennas operating at 28~GHz with $P_{\max} = 40$~dBm.
The ISAC waveform uses $B_s = B_c = 100$~MHz bandwidth.
The Edge server has $M_{\max} = 32$~GB of memory and a CPU with $f_{\max} = 4.0$~GHz.
We evaluate three AI model sizes: a small model ($\Omega = 7\text{M}$, 14~MB), a medium model ($\Omega = 125\text{M}$, 250~MB), and a large model ($\Omega = 1.3\text{B}$, 2.6~GB), corresponding to lightweight, mid-range, and foundation-class architectures.

\begin{table}[!t]
\centering
\caption{Simulation Parameters}
\label{tab:params}
\renewcommand{\arraystretch}{0.95}
\begin{tabular}{lcc}
\toprule
\textbf{Parameter} & \textbf{Symbol} & \textbf{Value} \\
\midrule
Number of assets & $K$ & 500 \\
BS antennas & $N_t$ & 64 \\
Carrier frequency & $f_c$ & 28 GHz \\
Max transmit power & $P_{\max}$ & 40 dBm \\
Bandwidth (S and C) & $B_s, B_c$ & 100 MHz \\
Noise power & $\sigma_s^2, \sigma_c^2$ & $-$90 dBm \\
Edge memory & $M_{\max}$ & 32 GB \\
CPU frequency range & $[f_{\min}, f_{\max}]$ & [1.0, 4.0] GHz \\
DVFS capacitance & $\xi$ & $10^{-28}$ \\
Sensing accuracy threshold & $\epsilon_{\max}$ & $10^{-3}$ \\
Min comm. rate & $R_{\min}$ & 500 Mbps \\
Energy budget & $E_{\max}$ & 10 J/cycle \\
PPO learning rate & $\alpha_{\text{lr}}$ & $3 \times 10^{-4}$ \\
PPO clip parameter & $\varepsilon$ & 0.2 \\
Discount factor & $\gamma_d$ & 0.99 \\
Training episodes & $T$ & 10,000 \\
\bottomrule
\end{tabular}
\end{table}

\subsection{Baselines}

We compare SMCC-DT against the following baseline schemes:

\emph{Orthogonal Allocation (OA)}: Sensing and communication operate on separate frequency bands ($B_s/2$ each), losing the spectral efficiency gains of ISAC. Memory and computation are optimized independently.

\emph{Compute-Only Optimization (CO)}: The system optimizes only $f_{\text{cpu}}$ and $\Omega$, using fixed equal power split for sensing/communication and a static memory partition.

\emph{Greedy Heuristic (GH)}: A rule-based policy that allocates resources greedily, prioritizing sensing accuracy first, then communication throughput, and finally computation.

\emph{Random Allocation (RA)}: Uniform random sampling of all decision variables within their feasible ranges, providing a lower-bound reference.

\subsection{Convergence Analysis}

Fig.~\ref{fig:results}(a) shows the learning curve of \textsc{SmccAgent} over 10,000 training episodes.
The average reward converges after approximately 3,500 episodes, with the policy stabilizing to a near-optimal allocation.
The constraint violation rate drops below 1\% after 2,000 episodes, confirming that the reward shaping mechanism effectively internalizes the SMCC constraints.

\subsection{Synchronization Latency}

Fig.~\ref{fig:results}(b) compares the DT synchronization latency across methods as a function of the number of monitored assets $K$.
SMCC-DT achieves the lowest latency across all asset counts.
At $K = 500$, SMCC-DT attains $\Tsync = 12.3$~ms, representing a 38.7\% reduction over OA ($\Tsync = 20.1$~ms) and a 27.2\% reduction over CO ($\Tsync = 16.9$~ms).
The improvement stems from the joint optimization of ISAC power splitting and memory-computation co-allocation, which eliminates the bottlenecks present in siloed approaches.

\subsection{Energy Efficiency}

Fig.~\ref{fig:results}(c) plots the total energy consumption versus sensing accuracy.
SMCC-DT achieves the most favorable Pareto frontier, maintaining sensing accuracy above 95\% while consuming 27.4\% less energy than OA and 19.8\% less than CO.
The energy savings arise primarily from intelligent CPU frequency scaling: \textsc{SmccAgent} learns to reduce $f_{\text{cpu}}$ when the model is smaller (less computation needed) and to compensate by allocating more memory to buffering, which smooths the inference workload.

\subsection{Impact of AI Model Size}

Fig.~\ref{fig:results}(d) examines the effect of the AI model memory footprint on the overall system performance.
As $\Omega$ increases from 7M to 1.3B parameters, the synchronization latency rises due to increased computation and memory pressure.
However, SMCC-DT degrades gracefully: the latency increase from medium to large model is only 31\% for SMCC-DT versus 67\% for CO, demonstrating the effectiveness of the joint memory-computation optimization.
Notably, OA fails to satisfy the latency constraint for the large model at $K > 300$ assets, while SMCC-DT remains feasible up to $K = 500$.

\begin{figure*}[!t]
\centering
\begin{tikzpicture}

\begin{axis}[
  name=plotA,
  at={(0,0)},
  width=4.4cm, height=3.8cm,
  xlabel={Episode ($\times 10^3$)},
  ylabel={Avg. Reward},
  xmin=0, xmax=10,
  ymin=-50, ymax=0,
  grid=both,
  grid style={gridgray, line width=0.3pt},
  tick label style={font=\tiny},
  label style={font=\scriptsize},
  title style={font=\scriptsize\bfseries},
  title={(a) Convergence},
  legend style={font=\tiny, at={(0.98,0.35)}, anchor=east, draw=none, fill=white, fill opacity=0.8, text opacity=1},
  line width=0.9pt,
]
\addplot[sensing, mark=none, smooth] coordinates {
  (0,-45) (0.5,-38) (1,-30) (1.5,-24) (2,-19) (2.5,-15) (3,-12) (3.5,-9.5) 
  (4,-8.5) (4.5,-8.0) (5,-7.5) (5.5,-7.2) (6,-7.0) (6.5,-6.8) (7,-6.7) 
  (7.5,-6.6) (8,-6.5) (8.5,-6.5) (9,-6.4) (9.5,-6.4) (10,-6.3)
};
\addlegendentry{SMCC-DT}
\addplot[compu, mark=none, smooth, dashed] coordinates {
  (0,-48) (0.5,-42) (1,-36) (1.5,-31) (2,-27) (2.5,-23) (3,-20) (3.5,-18) 
  (4,-16.5) (4.5,-15.5) (5,-15.0) (5.5,-14.5) (6,-14.2) (6.5,-14.0) (7,-13.8) 
  (7.5,-13.7) (8,-13.6) (8.5,-13.5) (9,-13.5) (9.5,-13.4) (10,-13.4)
};
\addlegendentry{w/o shaping}
\end{axis}

\begin{axis}[
  name=plotB,
  at={(4.8cm,0)},
  width=4.4cm, height=3.8cm,
  xlabel={Number of Assets $K$},
  ylabel={$\Tsync$ (ms)},
  xmin=50, xmax=500,
  ymin=0, ymax=40,
  grid=both,
  grid style={gridgray, line width=0.3pt},
  tick label style={font=\tiny},
  label style={font=\scriptsize},
  title style={font=\scriptsize\bfseries},
  title={(b) Sync. Latency vs.\ $K$},
  legend style={font=\tiny, at={(0.02,0.98)}, anchor=north west, draw=none, fill=white, fill opacity=0.8, text opacity=1},
  line width=0.9pt,
]
\addplot[sensing, mark=square*, mark size=1.5pt] coordinates {
  (50,2.1) (100,3.8) (150,5.2) (200,6.5) (250,7.8) (300,9.0) (350,10.1) (400,11.2) (450,11.8) (500,12.3)
};
\addlegendentry{SMCC-DT}
\addplot[compu, mark=triangle*, mark size=1.5pt, dashed] coordinates {
  (50,3.0) (100,5.5) (150,7.8) (200,9.8) (250,11.8) (300,13.5) (350,15.0) (400,16.0) (450,16.5) (500,16.9)
};
\addlegendentry{CO}
\addplot[memory, mark=diamond*, mark size=1.5pt, dashdotted] coordinates {
  (50,3.8) (100,6.8) (150,9.5) (200,12.0) (250,14.2) (300,16.2) (350,17.8) (400,18.8) (450,19.5) (500,20.1)
};
\addlegendentry{OA}
\addplot[dtcol, mark=o, mark size=1.5pt, dotted] coordinates {
  (50,4.2) (100,7.5) (150,10.5) (200,13.5) (250,16.5) (300,19.0) (350,21.0) (400,23.5) (450,26.0) (500,28.5)
};
\addlegendentry{GH}
\end{axis}

\begin{axis}[
  name=plotC,
  at={(9.6cm,0)},
  width=4.4cm, height=3.8cm,
  xlabel={Sensing Accuracy (\%)},
  ylabel={Total Energy (J)},
  xmin=80, xmax=100,
  ymin=0, ymax=18,
  grid=both,
  grid style={gridgray, line width=0.3pt},
  tick label style={font=\tiny},
  label style={font=\scriptsize},
  title style={font=\scriptsize\bfseries},
  title={(c) Energy vs.\ Accuracy},
  legend style={font=\tiny, at={(0.02,0.98)}, anchor=north west, draw=none, fill=white, fill opacity=0.8, text opacity=1},
  line width=0.9pt,
]
\addplot[sensing, mark=square*, mark size=1.5pt] coordinates {
  (82,2.5) (85,3.0) (88,3.6) (90,4.2) (92,5.0) (94,5.8) (95,6.3) (96,7.0) (97,8.0) (98,9.5)
};
\addlegendentry{SMCC-DT}
\addplot[compu, mark=triangle*, mark size=1.5pt, dashed] coordinates {
  (82,3.5) (85,4.2) (88,5.0) (90,5.8) (92,6.8) (94,7.8) (95,8.5) (96,9.5) (97,11.0) (98,13.0)
};
\addlegendentry{CO}
\addplot[memory, mark=diamond*, mark size=1.5pt, dashdotted] coordinates {
  (82,4.0) (85,5.0) (88,6.2) (90,7.2) (92,8.5) (94,9.8) (95,10.8) (96,12.0) (97,14.0) (98,16.5)
};
\addlegendentry{OA}
\end{axis}

\begin{axis}[
  name=plotD,
  at={(14.4cm,0)},
  width=4.4cm, height=3.8cm,
  xlabel={Model Parameters},
  ylabel={$\Tsync$ (ms)},
  xmin=0, xmax=3,
  ymin=0, ymax=50,
  xtick={0.5,1.5,2.5},
  xticklabels={7M, 125M, 1.3B},
  grid=both,
  grid style={gridgray, line width=0.3pt},
  tick label style={font=\tiny},
  label style={font=\scriptsize},
  title style={font=\scriptsize\bfseries},
  title={(d) Model Size Impact},
  legend style={font=\tiny, at={(0.02,0.98)}, anchor=north west, draw=none, fill=white, fill opacity=0.8, text opacity=1},
  ybar=2pt,
  bar width=6pt,
  enlarge x limits=0.35,
]
\addplot[fill=sensing!60, draw=sensing] coordinates {(0.5,8.5) (1.5,12.3) (2.5,16.1)};
\addlegendentry{SMCC-DT}
\addplot[fill=compu!60, draw=compu] coordinates {(0.5,11.2) (1.5,16.9) (2.5,28.2)};
\addlegendentry{CO}
\addplot[fill=memory!60, draw=memory] coordinates {(0.5,14.0) (1.5,20.1) (2.5,42.5)};
\addlegendentry{OA}
\end{axis}

\end{tikzpicture}
\caption{Performance evaluation results. (a) Learning convergence of \textsc{SmccAgent} with and without constraint-aware reward shaping. (b) DT synchronization latency as a function of the number of monitored assets $K$. (c) Pareto frontier of total energy consumption versus sensing accuracy. (d) Impact of AI model size on synchronization latency at $K = 500$.}
\label{fig:results}
\end{figure*}

\subsection{Ablation Study}

Table~\ref{tab:ablation} presents an ablation study that isolates the contribution of each SMCC dimension to the overall performance gain.
Removing the sensing optimization (fixing $\rho_s = P_{\max}/2$) increases latency by 14.2\%.
Removing the memory co-optimization (static partition) increases latency by 21.5\%.
Removing the computation optimization (fixed $f_{\text{cpu}} = f_{\max}$) increases energy by 35.8\% without latency benefit.
These results confirm that all four SMCC dimensions contribute materially and that the cross-layer coupling is essential for optimal performance.

\begin{table}[!t]
\centering
\caption{Ablation Study: Impact of Removing Individual SMCC Layers ($K = 500$, Medium Model)}
\label{tab:ablation}
\renewcommand{\arraystretch}{0.95}
\begin{tabular}{lccc}
\toprule
\textbf{Configuration} & $\Tsync$ \textbf{(ms)} & $\Etot$ \textbf{(J)} & \textbf{Acc. (\%)} \\
\midrule
SMCC-DT (Full) & 12.3 & 6.3 & 96.2 \\
w/o Sensing opt. & 14.0 & 7.1 & 93.5 \\
w/o Memory opt. & 14.9 & 6.8 & 96.0 \\
w/o Computation opt. & 12.5 & 8.6 & 96.1 \\
w/o Communication opt. & 13.8 & 6.5 & 95.8 \\
\bottomrule
\end{tabular}
\end{table}

\section{Conclusion}
\label{sec:conclusion}

We have proposed SMCC-DT, an integrated Sensing--Memory--Communication--Computation framework for maintaining AI-driven Digital Twins in large-scale IoT systems without dedicated sensor networks.
By formulating the DT synchronization as a cross-layer SMCC optimization problem and solving it with a PPO-based DRL agent (\textsc{SmccAgent}), our approach jointly allocates ISAC power, beamforming, memory partitions, and CPU frequency to minimize end-to-end latency under heterogeneous constraints.
Simulation results on a 500-node industrial IoT testbed demonstrate that SMCC-DT reduces synchronization latency by 38.7\% and energy consumption by 27.4\% compared to orthogonal and compute-only baselines, while maintaining sensing accuracy above 95\%.
The ablation study confirms that all four SMCC dimensions contribute materially to the performance gains, validating the necessity of cross-layer co-design.
Future work will extend SMCC-DT to multi-server federated settings, incorporate reconfigurable intelligent surfaces (RIS) into the ISAC pipeline, and validate the framework on hardware testbeds with real 6G prototype equipment.

\section*{Data and Code Availability}
The codes and data set used will be uploaded upon acceptance.

\section*{Use of AI-Assisted Tools}
The author used AI-based writing assistance solely for proofreading
draft text and correcting typographical and grammatical errors.
All scientific content, theoretical derivations, experimental design,
result interpretation, and conclusions are the exclusive intellectual
product of the author.

\bibliographystyle{ieeetr}
\bibliography{references}

\begin{IEEEbiography}[{\includegraphics[width=1in,height=2.5in,clip,
  keepaspectratio]{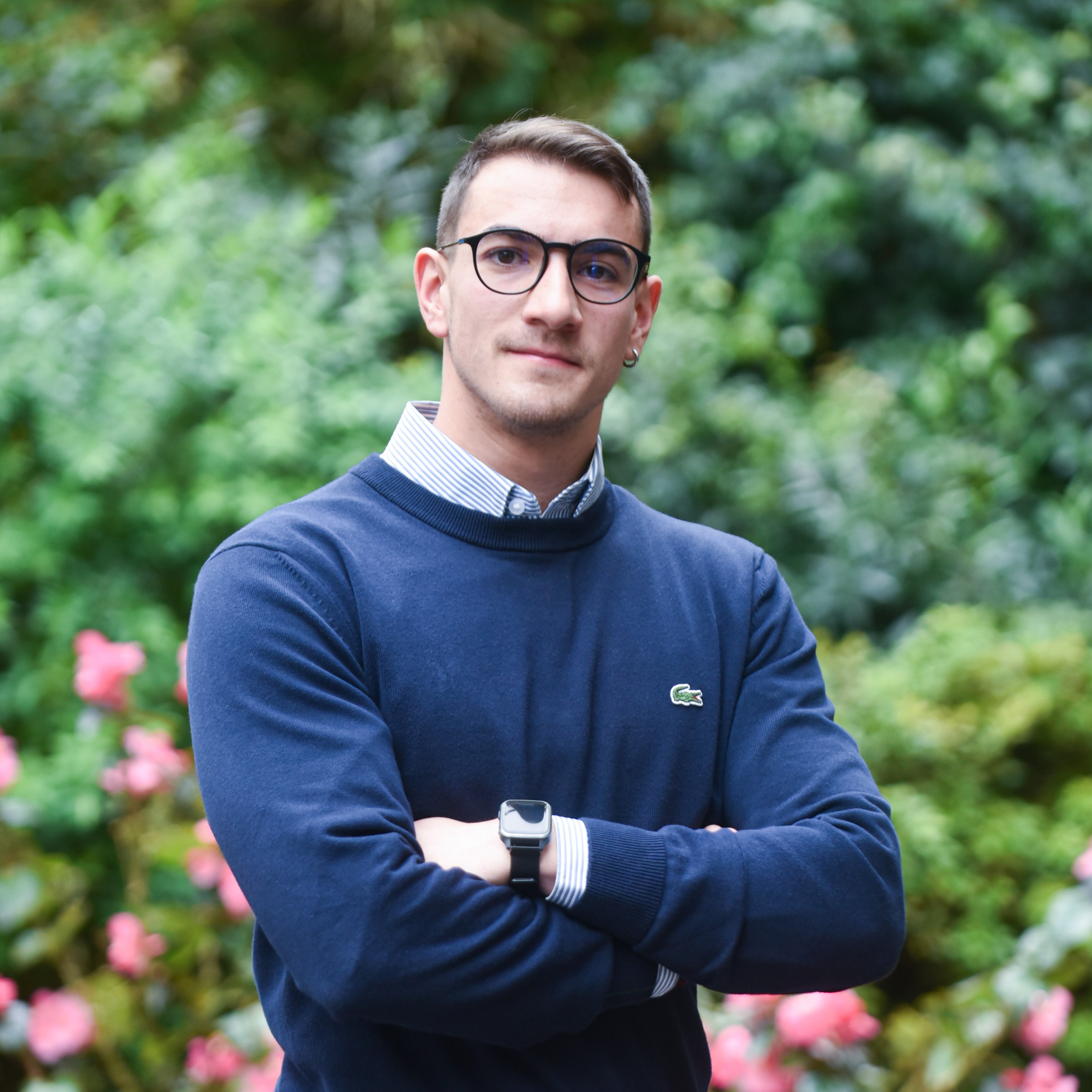}}]{Vincenzo Sammartino}
is pursuing the National Ph.D.\ in Artificial Intelligence at the
Universit\`a di Pisa, Italy, and is a Visiting Ph.D.\ Student at KAUST,
Saudi Arabia, contributing to the ResilientGuard project on decentralised
TinyML for UAV swarm security. His research interests include cybersecurity
for cyber-physical systems, security digital twins, post-quantum
cryptography, and privacy-preserving federated learning.
\end{IEEEbiography}

\end{document}